\begin{document}

\newtheorem{theorem}{Theorem}[section]
\newtheorem{lemma}[theorem]{Lemma}
\newtheorem{lctheorem}[theorem]{theorem}
\newtheorem{proposition}[theorem]{Proposition}
\newtheorem{sublemma}[theorem]{Sublemma}
\newtheorem{corollary}[theorem]{Corollary}
\newtheorem{conjecture}[theorem]{Conjecture}
\newtheorem{question}[theorem]{Question}
\newtheorem{problem}[theorem]{Problem}
\newtheorem*{claim}{Claim}
\newtheorem*{criterion}{Criterion}
\newtheorem*{main_theorem}{Theorem A}
\newtheorem*{second_theorem}{Theorem B}
\newtheorem*{cauchy_theorem}{Topological Cauchy-Schwarz inequality}
\newtheorem*{lemma_schema}{Lemma Schema}

\theoremstyle{definition}
\newtheorem{definition}[theorem]{Definition}
\newtheorem{construction}[theorem]{Construction}
\newtheorem{notation}[theorem]{Notation}
\newtheorem{convention}[theorem]{Convention}
\newtheorem*{explanation}{Explanation}
\newtheorem*{warning}{Warning}

\theoremstyle{remark}
\newtheorem{remark}[theorem]{Remark}
\newtheorem{example}[theorem]{Example}

\numberwithin{equation}{subsection}

\newcommand{\marginal}[1]{\leavevmode\marginpar{\tiny\raggedright#1}} 

\newcommand{\innerprod}[2]{\langle {#1} , {#2} \rangle}

\newcommand{\torus}{{\rm{torus}}}
\newcommand{\area}{{\rm{area}}}
\newcommand{\length}{{\rm{length}}}
\newcommand{\id}{{\rm{Id}}}
\newcommand{\R}{{\mathbb R}}
\newcommand{\RP}{{\mathbb R}P}
\newcommand{\Z}{{\mathbb Z}}
\newcommand{\Q}{{\mathbb Q}}
\newcommand{\T}{{\mathbb T}}
\newcommand{\J}{{\mathbb J}}
\newcommand{\M}{{\mathcal M}}
\newcommand{\gn}{\textrm{gn}}   
\newcommand{\ess}{{\mathcal S}}
\newcommand{\dee}{{\mathcal D}}
\newcommand{\Mdot}{{\dot\M}}
\newcommand{\Pdot}{{\dot{{\mathcal P}}}}
\newcommand{\Owe}{{\mathcal O}} 
\renewcommand{\H}{{\mathbb H}}
\newcommand{\C}{{\mathbb C}}
\newcommand{\bbar}{\overline}
\newcommand{\hhat}{\widehat}
\newcommand{\til}{\widetilde}
\newcommand{\mmin}{\text{min}}
\newcommand{\inc}{\text{inc}}
\newcommand{\Ric}{\text{Ric}}
\newcommand{\inv}{\text{inv}}
\newcommand{\cusp}{\text{cusp}}
\newcommand{\Xb}{\stackrel{\leftrightarrow}{X}}

\def\hline{\bigskip\hrule\bigskip}  
\def\nn#1{{\it [#1]}}       
\def\cN{{\mathcal N}}
\def\cO{{\bf O}}    
\def\Oh{{\bf O}}    

\def\tn{\textnormal}

\title{Complexity Classes as Mathematical Axioms}
\author{Michael H. Freedman}
\address{Microsoft Station Q \\ University of California \\
Santa Barbara, CA 93106}
\email{michaelf@microsoft.com}

\date{2/8/2009, Version 1.03}

\begin{abstract}
Complexity theory, being the metrical version of decision theory, has long been suspected of harboring undecidable statements among its most prominent conjectures.  Taking this possibility seriously, we add one such conjecture, $P^{\# P} \neq NP$, as a new ``axiom'' and find that it has an implication in 3-dimensional topology. This is reminiscent of Harvey Friedman's work on finitistic interpretations of large cardinal axioms.
\end{abstract}

\maketitle
\vspace{-0.12in}
\section{Introduction}

This short paper introduces a new subject with a simple example. The theory of computation defines a plethora of complexity classes. While the techniques of diagonalization and oracle relativization have produced important separation results, for nearly forty years the most interesting (absolute) separation conjectures, such as $P \neq NP$ remain unproven, and with the invention of ever more complexity classes, analogous separation conjectures have multiplied
in number.

With no prospect in sight for proving these conjectures (within
$ZFC$) and the suspicion that some are actually independent, we propose considering them instead as potential axioms and looking for what implications they might have in mathematics as a whole.  This program is analogous to the search for interesting ``finitistic'' consequences of large cardinal axioms, an area explored by Harvey Friedman and collaborators (e.g. \cite{Friedman}).
(Although, in the latter case, the large cardinal axioms are actually known to be independent of $ZFC$.)

What would be the best possible theorem in this subject?  It would be to postulate a very weak separation ``axiom,'' say $P \neq PSPACE$, and prove the Riemann hypothesis, i.e. an important mathematical result apparently far removed from complexity theory. Of course, we should be more modest.  We will assume a more technical but well accepted separation ``axiom'' $P^{\# P} \neq NP$, which we call Axiom A, and prove a theorem, Theorem A, in knot
theory. The theorem is easily and briefly expressed in terms of classical notions such as ``girth'' and ``Dehn surgery'' and appears to be as close to current research topics in knot theory as the known finitistic implications of the large cardinal axioms are to research in Ramsey theory, to continue that analogy.  Theorem A is extremely believable but seems to exist in a ``technique vacuum.'' What makes the theorem interesting is that it sounds both ``very
plausible'' and ``impossible to prove.''

\section{Theorem A}

We consider smooth links $L$ of finitely many components in $\R^3$ and their planar diagrams $D$. The girth of a diagram $D$ (in the $xz$-plane), $g(D)$, may be defined as the maximum over all lines $z =$ constant of the cardinality of the horizontal intersection $|D \cap (z = \text{constant})|$. For a link $L$, we define $girth(L)=\min\{g(D) | D \text{ is a diagram of } L\}$. Similarly, the complexity number $c(D)$ of a link diagram is defined as half its number of crossings plus half the number of local maxima and minima with respect to the $z$-coordinate. The complexity of a link, $c(L)=\min\{c(D)| D \text{ is a diagram of } L\}$. Theorem A addresses how girth can change under certain equivalence relations $\sim_r$ defined below.

Let $r \neq 6$ be an integer greater than or equal to $5$.  Consider passing from a link $L$ to $L \coprod U$, the disjoint union of $L$ and an additional unknotted component $U$, and then from $L \coprod U$ to $L'$ by performing $\frac{\pm 1}{4r}$-Dehn surgery on $U$. Denote by $\sim_r$ the equivalence relation on links generated by $L \rightarrow L'$.  In other words, this equivalence relation allows us to sequentially locate imbedded $2$-disks $\Delta$ transverse to $L$ and preform a $\pm 8\pi r$ twist across $\Delta$ to modify $L$; after several steps, we have arrived at a link, which we will denote $L'$, ``equivalent'' to $L$.  In slight abuse of notation, we also consider $\sim_r$ as an equivalence relation on diagrams: $D \sim_r D'$ iff $D$ represents $L$, $D'$ represents $L'$, and $L \sim_r L'$.

If $D$ and $D'$ are diagrams for the same link $L$, we may take their distance to be the minimum number of Reidemeister/Morse moves connecting $D$ to $D'$. Representative examples of these moves are displayed in figure \ref{reid}. We consider only diagrams in Morse position with respect to the $z$-coordinate and include in our count births, deaths, and level crossings, as well as the three familiar Reidemeister moves. Suppose next that $D$ and $D'$ are diagrams for $\sim_r$ equivalent links $L$ and $L'$. We need a measure of the distance between $D$ and $D'$.  It does not make sense to count each Dehn surgery as one step since the disk $\Delta$ may have an unboundedly complicated relation to $L$. There is no loss of generality, since $D$ can be modified by Reidemeister/Morse moves, in considering only disks $\Delta$ that meet $D$ in the standard form, seen in figure \ref{stdform} below.  More precisely, after Reidemeister/Morse moves, we may assume that in $D(L \coprod U)$, $U$ bounds a disk $\Delta$ and a neighborhood of $\Delta$ in $D(L \coprod U)$ appears as in figure \ref{stdform}.

Since $\pm 4r$-twisting along $\Delta$ introduces $4 r n(n-1)$ crossings, we will call half this, $2rn(n-1)$, the distance between the twisted and untwisted diagrams.  Now, $dist_r(D,D')$ can be defined to be the minimum number of (weighted) steps from $D$ to $D'$ where each isotopy induced, Reidemeister/Morse move is given weight $1$, except Reidemeister 1 which is weighted $\frac{3}{2}$ since three features can appear, a crossing, a local max and a local min, and each standard form $4r$-twist along $\Delta$ is given weight $2rn(n-1)$. (The exact form of $dist_r$ is irrelevant.  What is important is that if $D$ and $D'$ have a polynomial ``distance'' (in $\max(c(D), c(D'))$) then there is a polynomial sized certificate demonstrating that $L \sim_r L'$.  This clearly holds for $dist_r$ as defined.)

\begin{figure}[]
\labellist \small\hair 2pt
  \pinlabel $\text{Reidemeister }1$ at 67 98
  \pinlabel $\text{Reidemeister }2$ at 161 98
  \pinlabel $\text{Reidemeister }3$ at 294 98
  \pinlabel $\text{Birth}$ at 63 10
  \pinlabel $\text{Death}$ at 163 10
  \pinlabel $\text{Level Crossing}$ at 305 10
\endlabellist
\centering
\includegraphics[scale=1]{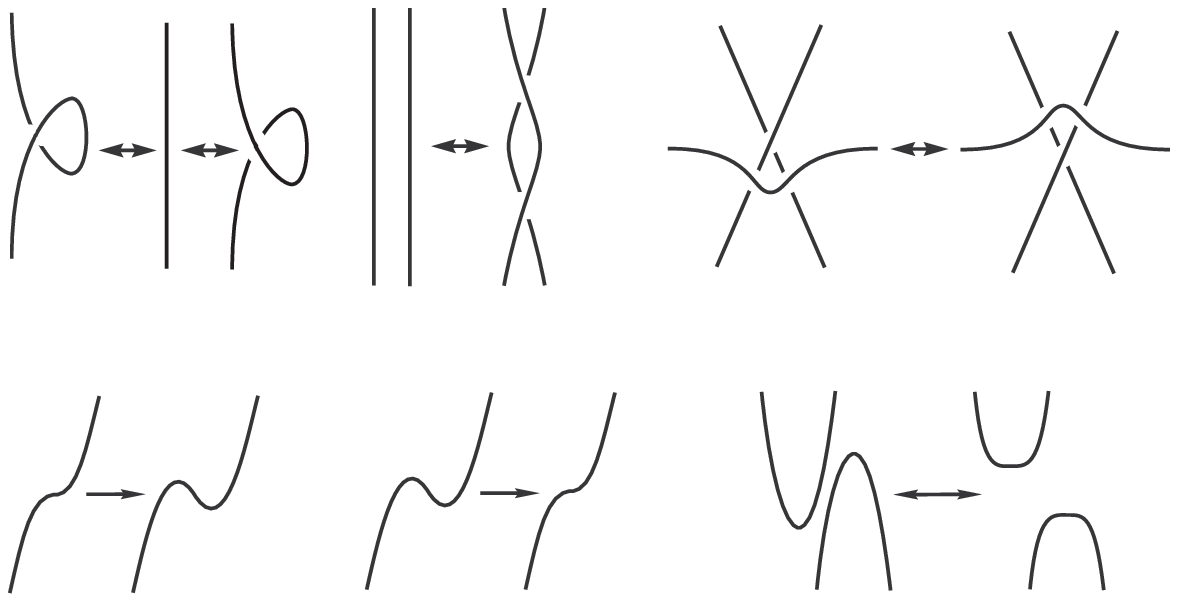}
\caption{}\label{reid}
\end{figure}

\begin{figure}[]
\labellist \small\hair 2pt
  \pinlabel $z$ at 38 55
  \pinlabel $x$ at 58 28
  \pinlabel $\Delta$ at 125 51
  \pinlabel $n$ at 125 17
  \pinlabel $\text{Part}$ at 217 59
  \pinlabel $\text{of \em{D}}(L\coprod U)$ at 217 51
\endlabellist
\centering
\includegraphics[scale=1.5]{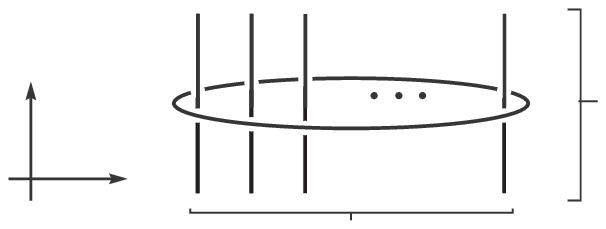}
\caption{} \label{stdform}
\end{figure}

\begin{main_theorem} \label{theorem_A}
If $r \geq 5$ is an integer not equal to $6$, $p$ a polynomial of
one variable, and $b,b'>0$ any constants, then there
exists a diagram $D$ such that if $D \sim_r D'$ then
\begin{align*}
g(D') > b\log(c(D))+b' \text{ unless} \\
dist_r(D, D') > p(c(D))
\end{align*}
\end{main_theorem}

Roughly, Theorem A says that some links $L$ cannot be made, via
$\sim_r$, extremely thin except possibly by an extraordinarily
elaborate sequence of moves.  It would be a surprise if the second
alternative actually occurred. In high dimensions \cite{Nabutovsky},
unsolvability of the triviality problem for groups implies that
geometric landscapes, for example that of the $5$-sphere in $S^6$,
are extremely (non-recursively) rough. However, this phenomenon has
not been seen in three manifold topology so it would be a surprise
if girth could be reduced only by a very long sequence of moves. We
conjecture that Theorem A remains true with the second alternative
omitted. However, for this statement no complexity axiom appears to
unlock the proof.

In the $1990$'s, A. Thompson \cite{Thompson} pointed out to me that girth,
by itself, can sometimes be computed exactly (see Claim below).  However, the equivalence relation $\sim_r$ is so disruptive of geometry that it appears to create the "technique vacuum" which we puncture with axiom A.

\begin{claim}
Let $k$ be the $(p,q)$-torus knot.  Then $g(k) = 2\min(p,q)$.
\end{claim}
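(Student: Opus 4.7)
The plan is to prove the two inequalities separately; assume without loss of generality that $p \leq q$, so $\min(p,q) = p$.

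For the upper bound $g(k) \leq 2p$, I would exhibit an explicit diagram. Represent $k$ as the closure of the $p$-strand braid $(\sigma_1 \sigma_2 \cdots \sigma_{p-1})^q$, drawn in the strip $\{0 \leq z \leq 1\}$ of the $xz$-plane and closed off with $p$ upside-down cap arcs (local maxima) just above $z=1$ and $p$ upright cap arcs (local minima) just below $z=0$.  At generic heights inside the braid strip the horizontal line meets the diagram in exactly $p$ points; inside the capping regions it meets the diagram in at most $2p$ points.  Hence $g(k) \leq 2p$.

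For the matching lower bound $g(k) \geq 2p$ I would combine two ingredients.  The first is Schubert's classical theorem that the bridge number $b(k)$ of the $(p,q)$-torus knot equals $\min(p,q) = p$.  The second is the general inequality $g(L) \geq 2 b(L)$, valid for any knot $L$ whose complement contains no essential meridional planar surface --- in particular, for any torus knot, since its complement is Seifert fibered.  To prove this inequality, I would lift a girth-minimizing diagram of $L$ to a Morse embedding of $L$ in $\R^3$ whose $z$-coordinate restricts to a height function with maximum level-set size equal to $g(L)$.  Then I would apply the classical thin-position simplification: swap each adjacent ``min-below-max'' pair of critical points of $z|_L$, pushing maxima down and minima up.  Each such swap strictly decreases the width $w(L) = \sum_i |L \cap \{z = c_i\}|$ and, crucially, can only decrease (never increase) the maximum level set.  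At a local minimum of width --- ``thin position'' --- Thompson's theorem for small knots identifies the embedding with a bridge position, whose maximum level set is exactly $2 b(L) = 2p$.  Tracing back through moves that did not raise the maximum, the starting embedding had maximum level set at least $2p$, so $g(k) \geq 2p$.

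The main obstacle is establishing the inequality $g(L) \geq 2 b(L)$.  Two subtleties arise: (i) verifying that the thin-position swaps genuinely cannot enlarge the maximum level set, which requires a careful local analysis of Morse reductions and an argument that the moves can be realized by an ambient isotopy without any other strand of $L$ obstructing them; and (ii) applying Thompson's theorem, which relies on the torus knot complement being sufficiently simple (small, Seifert fibered) to rule out essential meridional planar surfaces.  This three-manifold input is the real content of the lower bound; the rest is bookkeeping.  Notably, no complexity-theoretic axiom is used in this baseline claim --- it is precisely the stability of girth under $\sim_r$ that demands the extra muscle of Axiom A in Theorem A.
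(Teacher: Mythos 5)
Your route is genuinely different from the paper's, and the lower bound as you have set it up has a real gap. The paper does not go through bridge number at all: it puts the companion torus $T \supset k$ into Morse position, uses a homological argument to produce a single level plane meeting $T$ in two disjoint essential circles, caps each circle with an embedded disk built from pieces of the level plane and subsurfaces of $T$, concludes that the two circles are both meridians or both longitudes, and hence that this one level already meets $k$ in at least $2\min(p,q)$ points. That argument is elementary and self-contained. Your argument instead routes through Schubert's computation of the bridge number together with the inequality $g(L) \geq 2b(L)$ for knots with no essential meridional planar surface. That inequality is a true theorem (it is the computation of the ``trunk'' of meridionally small knots), but your proof of it does not close.

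The gap is precisely the step you flag as subtlety (i): you reduce width by swaps that do not increase the maximal level set, stop at ``a local minimum of width,'' and then invoke Thompson's theorem. Thompson's theorem is a statement about thin position, i.e.\ the \emph{global} minimum of width over the isotopy class; it says nothing about an embedding that merely admits no further width-reducing swaps reachable by monotone descent from your chosen girth-realizing diagram. There is no reason that descent terminates at thin position rather than at some other local minimum, and for such a ``locally thin'' position you still owe the argument that a thin level surface is incompressible (hence an essential meridional planar surface, contradicting the smallness of the torus knot complement). Supplying that is the real content of the inequality, in the style of Scharlemann--Thompson locally-thin arguments, and it is not bookkeeping. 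Two smaller remarks: in your upper-bound diagram the return strands of the braid closure also cross every level of the braid strip, so generic levels there meet the diagram in $2p$ points rather than $p$ (the bound $g(k) \leq 2p$ survives); and the paper silently takes this upper bound for granted, proving only the inequality you found hard.
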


\begin{proof}
So, $k \subset T \subset \R^3$, where $T$ is an unknotted torus
which we assume without loss of generality to be in generic (Morse)
position with respect to the $z$-coordinate of $\R^3$.  A
straightforward homological argument shows that some $z$-level must
intersect $T$ in one, in fact two, essential circles $C \coprod C'
\subset T$.  One easily builds imbedded disks (from bits of the
level plane and subsurfaces of $T$) $D$ and $D'$ with $\partial D =
D \cap T = C$ and $\partial D' = D' \cap T = C'$.  Thus, $C$ and
$C'$ are both meridians or both longitudes of $T$ and therefore must
contain at least $2\min(p,q)$ points of $k$.
\end{proof}

\section{A Complexity Reminder}

The exhibited inclusions in figure \ref{cplxincl} are all theorems
or tautologies.  The exhibited differences are all ``separation
conjectures'' to which we might grant the status of ``axioms.''
The existence of a problem $y \in P^{\# P} \setminus NP$ is the
axiom, ``Axiom A,'' we add to $ZF$, Zermelo-Fraenkel set theory, for the ``proof'' of Theorem A.

\begin{figure}[]
\labellist \small\hair 2pt
  \pinlabel $P$ at 85 53
  \pinlabel $NP$ at 84 72
  \pinlabel $P^{PP}=P^{\# P}$ at 166 61
  \pinlabel $PSPACE$ at 209 59
  \pinlabel $y$ at 168 52
  \pinlabel $PH$ at 121 70
\endlabellist
\centering
\includegraphics[scale=1.5]{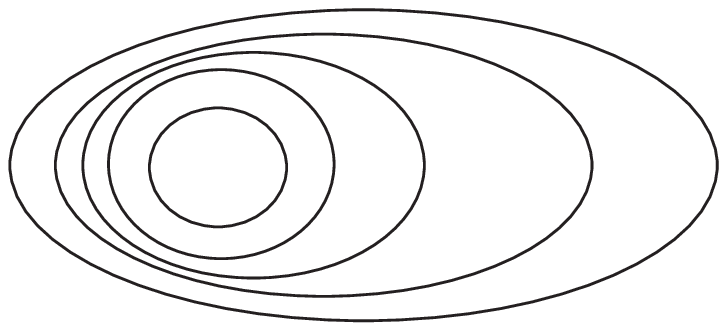}
\caption{} \label{cplxincl}
\end{figure}

Briefly, $P$ consists of decision (yes/no) problems or languages for which membership is determined in polynomial time (in input size) on a classical computer (Turing machine).  $NP$ (nondeterministic polynomial) is the class of languages which have a polynomial time protocol such that ``yes'' instances have a certificate which is accepted whereas there is no such requirement for ``no'' instances. $\# P$ is the counting analogy to $NP$ and asks how many of a fixed family of potential certificates will be accepted; the paradigmatic example problem being to find the number of assignments satisfying a boolean
formula. Since $\# P$ is a class of functions, not languages, one sometimes weakens $\# P$ to class $PP$ of languages where membership is determined by asking if more than half of the nondeterministic computations are accepting. $PP$ ``sees'' the first bit of $\# P$. We use the oracle notation $P^A$ in the sense of Cook (also called ``Turing reduction''), to mean polynomial time computation assisted by (possibly repeated) calls to the $A$ oracle (post processing permitted).  It is known that $P^{PP} = P^{\# P}$, so weakening $\#
P$ to a language does not affect its oracular power.  A function $f$ is called $\# P$-hard if $P^{\# P} \subseteq P^{A}$, $A$ an oracle for $f$.  $PH$ denotes the polynomial time hierarchy, a game theoretic extension of $NP$ allowing finite quantification. Toda proved that $PH \subseteq P^{PP}$ \cite{Toda}. Finally, $PSPACE$ is the class of decision problems solvable in an arbitrary amount of time, but using only a polynomial memory resource.  See \cite{Papa} for more background.

We use Axiom A, $P^{\# P} \neq NP$, to prove Theorem A.  Failure of
Axiom A would imply a large collapse of the polynomial hierarchy
$PH$ down to $NP$, so Axiom A must be considered extremely safe.

\section{Axiom A Implies Theorem A}

Our connection between links $L$ and complexity is the Jones polynomial \cite{Turaev} which we write as $J_L(q)$. Evaluations of $J_L$ at roots of unity $\omega = e^{2 \pi i/r}$ are known \cite{RT} to be computed as the partition function $Z_{SU(2),k}(S^3, L)$ of the topological quantum field theory (TQFT) associated with the Lie group $SU(2)$ at level $k=r-2$.  What will be of critical importance for us is that these Jones evaluations $J_L(\omega)$ will be constant as $L$ is transformed to $L' \sim_r L$.

\begin{lemma} \label{J_L_lemma}
If $L \sim_r L'$ then $J_L(e^{2\pi i/r}) = J_{L'}(e^{2\pi i/r})$.
\end{lemma}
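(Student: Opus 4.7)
The plan is to use the Reshetikhin–Turaev interpretation quoted just before the lemma: $J_L(e^{2\pi i/r})$ is proportional to the level-$(r-2)$ $SU(2)$ partition function $Z_{SU(2),k}(S^3, L)$, computable skein-theoretically as the Kauffman bracket at $A = e^{\pm i\pi/(2r)}$ (up to writhe normalization). I would show that the $\sim_r$ move acts trivially on this bracket because $A^{4r} = 1$. The first step is a topological reinterpretation: since $\pm 1/(4r)$-surgery on an unknot $U$ bounding the standard disk $\Delta$ returns $S^3$ via a homeomorphism realized by $\mp 4r$ full Dehn twists supported in a collar of $\Delta$ (the Rolfsen/Kirby twist trick), the resulting link $L' \subset S^3$ coincides with $L$ outside a ball $B \supset \Delta$ and inside $B$ differs from $L$ by insertion of the braid $T^{\mp 4r}$, where $T$ is the central full-twist generator of the pure braid group $P_n$.

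Next I would localize the bracket calculation at $B$. The tangle $L \cap B$ represents a vector $v$ in the Temperley–Lieb algebra $TL_n$ at $A = e^{\pm i\pi/(2r)}$, and $L' \cap B$ represents $\rho(T)^{\mp 4r} v$, where $\rho : B_n \to TL_n$ is the representation $\sigma_i \mapsto A + A^{-1} e_i$. Centrality of $T$ in $B_n$ makes $\rho(T)$ central in the (semisimple quotient of) $TL_n$, and it acts on the simple summand indexed by the Jones–Wenzl idempotent $p_j$ (with $j \le r - 2$ and $j \equiv n \pmod{2}$) by a scalar $\mu_j$ of the form $c_n \cdot A^{j(j+2)}$, with $c_n$ a universal $j$-independent factor arising from the overall twist anomaly on $n$ fundamental strands. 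Because $A^{4r} = e^{\pm 2\pi i} = 1$ and $j(j+2) \in \Z$, one computes $\mu_j^{4r} = c_n^{4r}$ for every allowed $j$, so $\rho(T)^{4r}$ acts by a single scalar on the entire TL image.

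Finally, converting Kauffman bracket to Jones polynomial absorbs the remaining scalar: the $4r$-fold twist changes the writhe by $\pm 4r n(n-1)$, contributing a factor $(-A)^{\mp 12 r n(n-1)} = A^{\mp 12 r n(n-1)}$ that is $1$ since $A^{4r} = 1$; combined with $c_n^{4r} = 1$ (equivalent to the TQFT twist formula $\theta_j = e^{2\pi i h_j}$ with $h_j = j(j+2)/(4r)$, giving $\theta_j^{4r} = 1$), this yields $J_L(e^{2\pi i/r}) = J_{L'}(e^{2\pi i/r})$. The main obstacle is the precise eigenvalue identification $\mu_j = c_n \cdot A^{j(j+2)}$ together with the determination of the universal anomaly $c_n$, so that its $4r$-th power exactly cancels the writhe correction. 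I would establish the $j$-dependence either by direct computation, applying $T = (\sigma_1 \cdots \sigma_{n-1})^n$ to the Jones–Wenzl idempotents and using $e_i p_j = 0$ plus the half-twist eigenvalue $\sigma_i p_j = \lambda_j p_j$, or by appealing to the RT axiomatics for the twist of simple objects at a root of unity.
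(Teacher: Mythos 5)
Your proof is correct and is essentially the paper's argument in skein-theoretic dress: the paper recouples the $n$ strands through $\Delta$ into trivalent ribbon graphs with a single labelled edge crossing $\Delta$ (your Jones--Wenzl decomposition into sectors $p_j$) and invokes $\theta(a)^{4r}=1$ (your $\mu_j^{4r}=c_n^{4r}=1$). The only real addition is your explicit bookkeeping of the writhe normalization $(-A)^{\mp 12rn(n-1)}$, which the paper leaves implicit by working directly with the framing-corrected TQFT partition function.
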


\begin{proof}
In the $SU(2)_{r-2}$ theories, all ``labels'' $a$ (that is, positive
normed irreps of the quantum group, or ``particle types'' in physics
language) have twist factor $\theta(a)$ which is a $4r$-th root of
unity. Specifically, enumerating $a=0,\dots,r-2$, one has $\theta(a)
= \beta^{a^2+2a}$ where $\beta = e^{2 \pi i/4r}$ \cite{RT}.

\begin{figure}[]
\labellist \small\hair 2pt
  \pinlabel $a$ at 79 4
  \pinlabel $a$ at 159 4
  \pinlabel $\theta(a)$ at 147 50
  \pinlabel $=$ at 120 50
  \pinlabel $\theta(a)^{4r}=1$ at 220 50
\endlabellist
\centering
\includegraphics[scale=1]{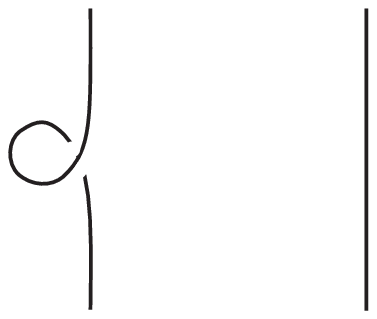}
\caption{} \label{twist}
\end{figure}

Now consider $L \coprod U$ where $U$ is a single unknotted loop bounding an imbedded disk $\Delta$ transverse to $L$.  Recoupling transforms $L$ to a superposition of trivalent ribbon graphs $\sum \alpha_i G_i$ with identical partition function, where each $G_i$ meets $\Delta$ in one edge with label $a_i$.  Now the partition function $Z(S^3,L) = J_L(e^{2\pi i/r})$ can be computed as $\sum \alpha_i Z(G_i)$.  But passing from $L$ to $L'$ amounts only to adding $4r$ full twists of the type drawn in figure \ref{twist} to the $a_i$ labeled particle line crossing $\Delta$.  Since $\theta(a_i)^{4r} =
1$, $Z(G_i)$ does not change under a $8\pi r$ twist. Consequently, $J_L(e^{2\pi i/r})=J_{L'}(e^{2\pi i/r})$.  I thank Ian Agol for pointing out that Fox \cite{Fox} considered a relation similar to $\sim_r$ in the 1950's and that Lackenby's theorem 2.1 \cite{Lackenby} contains lemma \ref{J_L_lemma}.
\end{proof}

It is a theorem of Vertigan (\cite{Vertigan2} or \cite{Vertigan1}
assisted by the result of \cite{Thistle}) that all non-zero
algebraic evaluations of the Jones polynomial $J_L(q)$ are $\#
P$-hard functions\footnote{Actually, applying Lagrangian interpolation, these functions are shown in \cite{JVW} to be $FP^{\# P}$-complete} of the input $L$ with the exceptions of those
$q$ satisfying $q^4=1$ or $q^6=1$. Thus, in oracle notation,
$P^{\J_r} = P^{\# P}$ where $\J_r$ accepts $L$ as input and returns
(an encoding of the algebraic integer) $J_L(e^{2\pi i/r})$, provided
$r \geq 5$ is an integer and $r \neq 6$.

From the lemma we see that the oracle $\J_r$ can work equally well with any link $L' \sim_r L$ as input or any diagram $D'$ for $L'$. But if $g(D') \leq b\log(c(L))+b'$, then the ``physical'' Hilbert space (i.e. the Hilbert space associated by $SU(2)_k$ TQFT to the $z=$ constant slices of $L$ (with charge
$a=1=$ fundamental)) will throughout the computation of the partition function have dimension $\displaystyle d < \sum_{i=0}^{r-2} S_{0,i}^{-(b \log c(D(L))+b')} < poly(c(D))$, using the Verlinde formula (VF), where $S_{0,i}=\sqrt{\frac{2}{r}} \sin(\frac{(i+1)\pi}{r})$, the first row of the $S$-matrix.  We have used minus our bound on girth as a lower bound to the Euler characteristic (the exponent in VF) for any $z=$ constant slice of the link complement in $\R^3$.

Thus, there is a prospect of replacing the oracle $\J_r$ entirely with a classical polynomial time computation in this small Hilbert space, by representing crossings by $R$-matrices and maxima (minima) by (co)units (as in Turaev's book \cite{Turaev}). To do this, two things must happen. First, $c(D')$ cannot be larger than $\text{poly}(c(D))$, that is,
the diagram $D'$, although fairly thin, also must not be too long in the $z$-direction. Second, there must be a polynomial number of advice bits which encode the steps from $D$ to $D'$ which certify that $D' \sim_r D$. If Theorem A were false, these poly-many advice bits could be used to certify transformations $D \sim_r D'$ where $D'$ would be thin enough, $g(D') < b \log(c(L))+b'$ and short enough $c(D') < c(D) + p(c(D))$ for a poly-time calculation of $J_{D'}(e^{2\pi i/r})$ to replace appeal to the oracle $\J_r$ implying $P^{\J_r} \subset NP$, contradicting Axiom A. We have used that $dist_r(D, D') < p(c(D))$ implies $c(D') < c(D) + p(c(D))$ since no more than two crossings or two critical points can be added to a diagram per unit weight step. This completes the proof of Theorem A in $ZF \cup \text{Axiom A}$.

\section{Conclusion}

Mathematical structures such as tilings \cite{Berger}, groups
\cite{Stillwell}, and, in several contexts, links \cite{FKLW} are
known to encode quite general computations. If transformations are
found which preserve the computational ``content'' of the structure,
then it may be expected that axioms stating a lower bound to
computational complexity will limit the scope of such
transformations in simplifying the structure.

\bibliographystyle{plain}
\bibliography{ccaxiom}

\end{document}